\documentclass{article}
\usepackage[utf8]{inputenc}
\usepackage{amsmath,amsthm,amssymb,graphicx,tikz,caption,subcaption,xcolor,nicefrac,hyperref,cleveref}
\usepackage{fullpage}
\usepackage{bm}
\usepackage{xspace}
\usepackage{framed}

\allowdisplaybreaks

\newtheorem{theorem}{Theorem}

\newtheorem{lemma}{Lemma}

\newcommand{\cN}{\mathcal{N}}

\newcommand{\tO}{\tilde{O}}

\newcommand{\E}{\mathbb{E}}
\newcommand{\eps}{\epsilon}

\newcommand{\tOmega}{\tilde{\Omega}}

\newcommand{\Paren}[1]{\left(#1\right)}

\title{Improved Approximation Algorithm \\for Maximum Balanced Biclique}
\date{\today}

\author{
Pasin Manurangsi\thanks{Email: \texttt{pasin@google.com}.} \\
Google Research
}

\begin{document}

\maketitle

\begin{abstract}
We study the Maximum Balanced Biclique (MBB) problem: Given a bipartite graph $G$ with $n$ vertices on each side, find a balanced biclique in $G$ with maximum size. We give a polynomial-time $\left(\frac{n}{\widetilde{\Omega}\left((\log n)^3\right)}\right)$-approximation algorithm for the problem, which improves upon an $\left(\frac{n}{\Omega\left((\log n)^2\right)}\right)$-approximation by Chalermsook et al.~\cite{ChalermsookJO20} and answers their open question. Furthermore, our approximation ratio matches that of the maximum clique problem by Feige~\cite{Feige04} up to an $O(\log \log n)$ factor.
\end{abstract}

\section{Introduction}

The \emph{Maximum Clique} problem---where we are given an undirected graph and the goal is to find a maximum clique---is one of the most well studied problems in all of combinatorial optimization. One of the original 21 NP-hard problems shown by Karp~\cite{Karp72}, Maximum Clique also played a central role in the foundation for connections between the PCP Theorem and hardness of approximation~\cite{FeigeGLSS96,AroraLMSS98,AroraS98,BellareGS98,Hastad96,Khot01,EngebretsenH03,KhotP06,Zuckerman07}. Specifically, it is NP-hard to approximate Maximum Clique to within a factor of $n^{1 - \eps}$ for any constant $\eps > 0$~\cite{Hastad96,Zuckerman07}. Under a stronger assumption that NP $\nsubseteq$ BPTIME$\Paren{2^{(\log n)^{O(1)}}}$, Khot and Ponnuswami~\cite{KhotP06} ruled out any polynomial-time algorithm with approximation ratio $n / 2^{(\log n)^{3/4 + \gamma}}$ for any constant $\gamma > 0$. This remains the best known hardness of approximation for Maximum Clique to date. Meanwhile, many approximation algorithms for Maximum Clique have been devised (e.g.~\cite{BoppanaH90,AlonK98,Feige04}). In particular, Feige's algorithm~\cite{Feige04} achieves the best known approximation ratio of $O\Paren{n \cdot \frac{(\log \log n)^2}{(\log n)^3}}$ in polynomial time.

In this work, we study a bipartite variant of the Maximum Clique problem, called the \emph{Maximum Balanced Biclique (MBB)} problem. Recall that a $k$-biclique, denoted by $K_{k, k}$, is the complete bipartite subgraph where each side contains $k$ vertices. In MBB, we are given a bipartite graph $G = (U, V, E)$ where $|U| = |V| = n$, and the goal is to find a maximum $k$-biclique in $G$. 

The exact version of MBB was stated\footnote{Originally, the book did not provide a proof, but such a proof was later published e.g. in~\cite{Joh87}.} to be NP-hard in Garey and Johnson's seminal book~\cite[page 196]{GJ79}. However, despite a considerable amount of effort over the years (e.g.~\cite{Fei02,FK04,Kho06,Man16,BhangaleGHKK17,Manurangsi18}), it is \emph{not} even known to be NP-hard to approximate to within a factor of $1 + \eps$ for any constant $\eps > 0$. Nevertheless, hardness of approximation results are known under other assumptions~\cite{Fei02,FK04,Kho06,Man16,BhangaleGHKK17,Manurangsi18}. Specifically, it is known to be NP-hard (under randomized reductions) to approximate to within a factor of $n^{1 - \eps}$ for any $\eps > 0$, assuming strengthenings of the Unique Games Conjecture~\cite{BhangaleGHKK17,Manurangsi18}.

Despite the apparent challenges in proving hardness of approximation results for MBB, not much is known in terms of approximation algorithms either. It is straightforward to see that an ``enumeration style'' algorithm yields $\Paren{\frac{n}{\tOmega(\log n)}}$-approximation for MBB in polynomial time. Only recently did Chalermsook et al.~\cite{ChalermsookJO20} manage to improve upon this approximation guarantee, obtaining an $\Paren{\frac{n}{\Omega\Paren{(\log n)^2}}}$-approximation. As an open question, they asked whether it is possible to improve this guarantee to match the $\Paren{\frac{n}{\tOmega\Paren{(\log n)^3}}}$-approximation by~\cite{Feige04} for the Maximum Clique problem.

\subsection{Our Contribution}
Our main result is an algorithm that achieves such an approximation guarantee:

\begin{theorem}[Main Theorem] \label{thm:main}
There is a randomized polynomial-time $O\Paren{n \cdot \Paren{\frac{\log \log n}{\log n}}^3}$-approximation for Maximum Balanced Biclique.
\end{theorem}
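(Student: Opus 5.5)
We adapt Feige's clique algorithm to the bipartite setting. Let $k^*$ be the optimum and set $t = \Theta\Paren{\frac{\log n}{\log\log n}}$. If $k^* \le n (\log\log n/\log n)^3$, outputting a single edge (a $K_{1,1}$) already gives the claimed ratio. So we assume $k^* \ge n/t^3$ up to constant factors, and the goal is to find a biclique of size $\Omega(t)$.

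The algorithm repeatedly grows a partial biclique $(A,B)$ together with candidate sets $U'\subseteq U$, $V'\subseteq V$. Here $U'$ is the common neighbourhood of $B$ and $V'$ is the common neighbourhood of $A$, and we maintain the invariant that an optimal biclique $(S,T)$ still has a large density $\delta$ inside $U'$ and inside $V'$ (namely $|S\cap U'|\ge \delta|U'|$, and similarly for $T$). Initially $\delta \ge 1/t^3$.

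In each phase we pick a random (or enumerated) subset of $\Theta(\log n/\log(1/\delta))$-ish vertices on one side. The aim is either to add them to the partial biclique while keeping the density roughly fixed, or to certify that the density inside their common neighbourhood has increased by a constant factor. Concretely, the phase works as follows.

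\emph{Step 1.} Choose a set $X$ of $\ell = \Theta\Paren{\log n / \log(1/\delta)}$ vertices from $U'$. Guessing $X\subseteq S$ by enumeration costs $\binom{|U'|}{\ell}$; with the parameters chosen this is $n^{O(\log(1/\delta))\cdot \ell/\log n} = \mathrm{poly}(n)$.

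\emph{Step 2.} Use a Feige-style dichotomy on the common neighbourhood $N(X)\cap V'$. Either it contains $T\cap V'$ with density $\ge 2\delta$, in which case we recurse with the density doubled, or most random choices keep the density at $\approx \delta$ while $|V'|$ shrinks only by a factor of about $\delta^{\ell}$. In the latter case we add $X$ to $A$ and gain $\ell$ biclique vertices.

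\emph{Step 3.} Alternate the phases between the two sides so that both $A$ and $B$ grow. The density can double only $O(\log t)$ times before reaching a constant. Once the density is constant, a greedy or degree-based step finds a biclique of size $\Omega(\log n)$ in the remaining sets, which is even better than needed.

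The accounting has to show that the total number of vertices added on each side is $\Omega(t) = \Omega(\log n/\log\log n)$. Each non-doubling step multiplies $|V'|$ by $\delta^{\ell} \approx 1/\mathrm{poly}(n)^{\text{small}}$ and adds $\ell \approx \log n/\log(1/\delta)$ vertices. Since $\log(1/\delta) = O(\log\log n)$, the sizes can shrink by $n^{\Omega(1)}$ overall, which yields $\Omega(\log n/\log\log n)$ vertices. The guessing cost of enumerating subsets must likewise stay polynomial.

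The main obstacle is that Feige's argument for clique uses that a dense vertex set contains a vertex of high degree into the clique. The bipartite analogue has to control both sides simultaneously: adding $X$ to $A$ shrinks $V'$ but leaves $U'$ unchanged, and the density of $S$ in $U'$ must not degrade. I would handle this with an averaging argument. Pick $X$ uniformly from $S\cap U'$ (implemented by random sampling plus enumeration). Then the expected size of $N(X)\cap V'$ relative to $|T\cap V'|$ is controlled by a Markov or second-moment bound on the degrees of vertices in $V'\setminus T$ into $S$. Vertices with high degree into $S$ are what push density upward: if many of them exist, restricting to them is the doubling step. Making this dichotomy quantitatively tight, losing only constant factors per phase so that the final bound is $(\log n/\log\log n)^3$ rather than $(\log n)^2$, is the delicate part.
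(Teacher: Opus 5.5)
There is a genuine gap, and it sits in your opening reduction: the target is wrong. The ratio you achieve is $k^*/s$, where $s$ is the side length of the biclique you output. To get $O\Paren{n(\log\log n/\log n)^3}$ you therefore need $s = \Omega\Paren{\frac{k^*}{n}\Paren{\frac{\log n}{\log\log n}}^3}$, not $s=\Omega(t)$ uniformly over all $k^* \ge n/t^3$.

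\begin{itemize}
\item For $k^*=\Theta(n)$ this means $s=\Omega\Paren{(\log n/\log\log n)^3}$.
\item For $k^*\approx n\log\log n/\log n$ it means $s=\Omega\Paren{(\log n/\log\log n)^2}$.
\end{itemize}

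An $\Omega(\log n/\log\log n)$-biclique certifies only ratio $O(n\log\log n/\log n)$, which is the trivial enumeration bound.

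The scheme also cannot simply be retuned to reach the correct target. Every vertex you add to the partial biclique is a guessed member of $S\cup T$, and each guess succeeds with probability about $\delta$. While the density stays below a constant, polynomial time therefore allows only $O(\log n)$ such vertices. Your terminal step at constant density is a K\H{o}v\'ari--S\'os--Tur\'an-type greedy bound, which again gives only $O(\log n)$. In particular, when $k^*=\Theta(n)$ the density is constant from the start, and your algorithm outputs an $O(\log n)$-biclique. That is a factor $\tOmega(\log^2 n)$ short, yet you describe it as ``even better than needed.''

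There is a separate error in Step~1: $\binom{n}{\ell}=n^{\Theta(\ell)}$ is superpolynomial for $\ell=\Theta(\log n/\log\log n)$. So enumeration is not polynomial; only random guessing, with success probability $\delta^{\ell}$, is.

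The missing idea is a genuinely stronger tool for the large-optimum regime. The paper supplies it by solving the SDP of Figure~\ref{fig:sdp-relaxation} at $k=\lceil n\cdot 10D\log\log n/\log n\rceil$.

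\begin{itemize}
\item \textbf{If the SDP is feasible:} the fractional-degree constraints \eqref{eq:frac_deg_u}--\eqref{eq:frac_deg_v} force the inner-product mass between heavy vertices (those with $c_i\ge 1/(8t)$) to be at least $\frac34 k^2$. Hence at least $k^2/4$ edges satisfy $\langle u_i,u_j\rangle>\frac12 c_ic_j$. After shifting by $\alpha c_i e$, these edges are nonnegatively correlated under Gaussian thresholding, while every non-edge has correlation at most $-1/(16t)$. Surviving edges thus dominate surviving non-edges, and \Cref{lem:extraction} extracts a $K_{r,r}$ with $r\ge n^{1/(Dt)}\ge(\log n)^{10}$.
\item \textbf{If the SDP is infeasible:} then $k^*=O(n\log\log n/\log n)$. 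Either $k^*\le n/\log^3 n$, which is trivial, or $t\le\log^3 n$. In the latter case \Cref{thm:cjo} yields an $\Omega\Paren{(\log n/\log\log n)^2}$-biclique, which suffices precisely because $k^*$ is now small.
\end{itemize}

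Your proposal has no counterpart to the first regime; the paper itself notes that handling it combinatorially is an open question. In the second regime, your $\Omega(\log n/\log\log n)$ target falls a factor $\log n/\log\log n$ short of the $(\log n/\log t)^2$ bound the paper imports from Chalermsook et al.
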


Our approximation ratio matches the aforementioned algorithm of Feige for the Maximum Clique~\cite{Feige04} problem up to an $O(\log \log n)$ factor.

\paragraph{Proof Overview.}
To explain the proof techniques, let us use $k$ to denote the size of the optimal balanced biclique in $G$, and let $t = n/k$. We first review the algorithmic guarantee of \cite{ChalermsookJO20}: If $t \leq (\log n)^{O(1)}$, then they show how to use graph removal technique from \cite{Feige04} on \emph{matchings} in order to find $\tOmega((\log n)^2)$-biclique, as formalized below.

\begin{theorem}[\cite{ChalermsookJO20}] \label{thm:cjo}
Given a graph $G$ that contains a $k$-biclique, there is a polynomial-time algorithm that finds a $\Theta\Paren{\Paren{\frac{\log n}{\log t}}^2}$-biclique in $G$ where $t = n/k$.
\end{theorem}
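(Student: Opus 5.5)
We need to prove Theorem \ref{thm:cjo}: if $G$ contains a $k$-biclique, we can find in polynomial time a $\Theta((\log n/\log t)^2)$-biclique, where $t=n/k$. The plan is a greedy ``sample-and-restrict'' procedure in the spirit of Feige's clique algorithm, made bipartite. We build the biclique $(A,B)$ by alternately adding small batches of vertices to each side, and we keep a candidate pool on each side that stays complete to the vertices already chosen on the opposite side. The key quantity is the density of the hidden $k$-biclique $(S,T)$ inside the current pools $(U',V')$. We maintain the invariant $|S\cap U'|\ge |U'|/t'$ and $|T\cap V'|\ge |V'|/t'$, with $t'$ growing by at most a constant factor per round.

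First, a single-vertex step. Pick a vertex $u\in S\cap U'$ and restrict $V'$ to $N(u)\cap V'$. All of $T\cap V'$ survives, since $u$ is complete to $T$. Of course we do not know $S$. So instead we guess a set $X\subseteq U'$ of size $s=\Theta(\log n/\log t)$ and restrict $V'$ to the common neighbourhood of $X$. Choose $X$ to consist only of $S$-vertices among those whose neighbourhood keeps the $T$-density high. We find such an $X$ by enumeration, which costs about $\binom{|U'|}{s}$ guesses, or more cleverly by a pigeonhole/averaging argument over a small random subpool of size about $ts$. The pool has size about $ts$, so the number of subsets is $\binom{ts}{s}\le (et)^s=n^{O(1)}$ exactly when $s=O(\log n/\log t)$. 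This shows that batches of size $\Theta(\log n/\log t)$ are affordable.

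Second, the density step, which gives the quadratic bound. After adding a batch $X$ to $A$, the new pool $V''=N(X)\cap V'$ still contains all of $T\cap V'$, but $V''$ may be much larger than $T$, so the density is not automatically preserved. Following Feige's ``graph removal'' idea, we classify each candidate vertex $u\in S$ as good or bad. A vertex is good if $|N(u)\cap V'|\le |V'|/2$, so that taking it shrinks the opposite pool, raising the $T$-density by a factor of $2$ while keeping $T$ intact. If most $S$-vertices are good, each chosen batch halves $V'$. Since $V'\supseteq T$, this can happen only $\log_2 t$ times before the density on that side reaches a constant. Once the density is constant, every further chosen vertex costs almost nothing, and greedy extension inside an almost-complete pool gives many vertices cheaply. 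If most $S$-vertices are bad, the pool $U'$ restricted to high-degree vertices is itself dense in the opposite direction. We then swap roles and recurse, with the ``matching'' view of \cite{ChalermsookJO20} handling the case of vertices whose non-neighbourhoods are small.

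Balancing these steps, we obtain $\Theta(\log n/\log t)$ rounds, each contributing a batch of $\Theta(\log n/\log t)$ vertices to each side, for $\Theta((\log n/\log t)^2)$ in total. The main obstacle is exactly this bookkeeping. We must show that the density parameter $t'$ grows by only a constant factor per round, so that the affordable batch size stays $\Theta(\log n/\log t)$ for $\Theta(\log n/\log t)$ rounds. My first attempt would be a potential-function argument tracking $\log|U'|+\log|V'|$ against $\log|S\cap U'|+\log|T\cap V'|$, charging each round either to a halving of a pool or to a successful batch.
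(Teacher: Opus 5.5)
\textbf{There is a genuine gap, and it sits in the part you defer as bookkeeping.}

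\textbf{Selecting batches without knowing $S$.} Your enumeration yields $n^{O(1)}$ candidate batches per round, but nothing identifies which one lies inside $S$. Carrying all of them through $\Theta(\log n/\log t)$ rounds costs $n^{\Theta(\log n/\log t)}$ time, which is not polynomial. Feige's method avoids branching in two ways:
\begin{itemize}
\item It accepts \emph{any} batch passing a checkable test: the batch's common neighbourhood in the current pool $W$ has size at least $|W|/O(t)$.
\item When no batch passes, it \emph{deletes} $W$. Failure certifies that $W$ contains fewer than $|W|/(2t)$ hidden-clique vertices, so all deletions together remove at most half the clique.
\end{itemize}

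\textbf{Why the bipartite version breaks.} In the bipartite setting this certificate is ambiguous. If no $X\subseteq U'$ passes, you only learn that $|S\cap U'|<|U'|/(2t)$ \emph{or} $|T\cap V'|<|V'|/(2t)$. You do not know which pool to delete, and deleting the wrong one can discard a constant fraction of $S$ or of $T$. None of your tools addresses this: not the good/bad classification, not ``swap roles and recurse'', and not the potential comparing $\log|U'|+\log|V'|$ with $\log|S\cap U'|+\log|T\cap V'|$. Handing the leftover case to ``the matching view of \cite{ChalermsookJO20}'' is circular, since that is the theorem being proved.

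Two smaller points. Your worry in the density step is misplaced: if $X\subseteq S$, then $T\cap V''=T\cap V'$ and $V''\subseteq V'$, so the $T$-density can only increase. Density is lost only by accepting a batch not contained in $S$, which is the issue above. Also, the halving argument accounts for only $O(\log t)$ rounds, not $\Theta(\log n/\log t)$.

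\textbf{The missing idea.} The paper does not reprove this theorem. It cites \cite{ChalermsookJO20} and describes the proof as Feige's removal technique applied to \emph{matchings}, a reduction to Maximum Clique that squares $t$. That coupling of the two sides is what you are missing.
\begin{itemize}
\item Fix a bijection $\pi:U\to V$ under which at least $k^2/n$ vertices $u\in S$ satisfy $\pi(u)\in T$. A uniformly random $\pi$ achieves this in expectation. Deterministically, identify $U$ and $V$ with $\mathbb{Z}_n$: the $n$ cyclic shifts together cover all $k^2$ pairs of $S\times T$, so one shift works.
\item Let $G_\pi$ be the graph on $\{u\in U:(u,\pi(u))\in E\}$ with $u\sim u'$ iff $(u,\pi(u'))\in E$ and $(u',\pi(u))\in E$.
\item Every clique $C$ of $G_\pi$ gives the balanced biclique $(C,\pi(C))$, and $S\cap\pi^{-1}(T)$ is a clique of size at least $n/t^2$.
\end{itemize}
Now there is a single pool of matched pairs, so deletion is unambiguous. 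Feige's accept-or-delete argument then yields a clique of size $\Theta((\log n/\log(t^2))^2)=\Theta((\log n/\log t)^2)$. Squaring $t$ costs only a constant factor. For $t\ge n^{c}$ with $c$ a small constant, the target is $O(1)$ and a single edge suffices.

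Equivalently, you can keep your direct procedure but choose batches of matched pairs $(u,\pi(u))$ rather than one side at a time.
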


It is not hard to see that the above theorem immediately yields $O(n/(\log n)^2)$-approximation. For our purpose, observe further that, if $t \geq \tOmega(\log n)$, then the above already yields $\left(\frac{n}{\widetilde{\Omega}\left((\log n)^3\right)}\right)$-approximation. Thus, we can focus on the case $t \leq \tilde{O}(\log n)$, i.e. when the optimum $k$ is at least $\frac{n}{\tO(\log n)}$. 

For the case of Maximum Clique, this case can be solved via semi-definite program (SDP) rounding~\cite{AlonK98,Halperin02} or via subgraph removal~\cite{BoppanaH90}. In particular, for $t \leq O\Paren{\frac{\log n}{\log \log n}}$, these algorithms can find a clique of size $n^{1/O(t)}$. Unfortunately, Chalermsook et al.'s~\cite{ChalermsookJO20} technique of reducing MBB to Maximum Clique has a square blow up on the parameter $t$. This means that one can only apply these Maximum Clique algorithms in a black-box manner\footnote{See~\cite{Manurangsi2022-cstheory-stackexchange} for a more detailed description on how to interpret \cite{ChalermsookJO20} as a reduction from MBB to Maximum Clique.} when $t \leq \tO(\sqrt{\log n})$, which is not sufficient for our purpose.

Instead, our main result is to show that, in the regime $t \leq O\Paren{\frac{\log n}{\log \log n}}$, a more direct SDP relaxation and rounding can yield a balanced biclique of size $n^{1/O(t)}$, as stated below. As mentioned earlier, combining this with Chalermsook et al.'s result (\Cref{thm:cjo}) immediately yields the final approximation guarantee.

\begin{theorem}[Main SDP Rounding] \label{thm:sdp-main}
There exists a constant $D > 1$ such that the following holds. If $\{u_i\}, e$ is a feasible solution to the SDP relaxation in \Cref{fig:sdp-relaxation} with $k \geq n \cdot \frac{D \cdot \log \log n}{\log n}$, there is an efficient rounding algorithm for finding a balanced biclique of at least size $n^{\frac{1}{D \cdot t}}$ with high probability, where $t = n/k$.
\end{theorem}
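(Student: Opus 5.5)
The plan is to adapt the Karger--Motwani--Sudan / Alon--Kahale style of Gaussian hyperplane rounding to the bipartite setting. I assume the SDP of \Cref{fig:sdp-relaxation} has the usual form: a unit vector $e$, vectors $u_i$ for $i \in U \cup V$ with $\langle u_i, e\rangle = \|u_i\|^2 =: x_i \in [0,1]$, $\sum_{i \in U} x_i \ge k$ and $\sum_{j \in V} x_j \ge k$, and $\langle u_i, u_j \rangle = 0$ for every non-edge $(i,j) \in U \times V$.

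\textbf{Normalization.} First I restrict to the heavy vertices $U' = \{i \in U : x_i \ge 1/(2t)\}$, and similarly $V'$. An averaging argument gives $|U'|, |V'| \ge k/2$. For each heavy vertex I form the unit vector $v_i = (u_i - x_i e)/\sqrt{x_i(1-x_i)}$, which is orthogonal to $e$. For a non-edge $(i,j)$ the constraint $\langle u_i,u_j\rangle = 0$ gives
\[
\langle v_i, v_j\rangle = -\sqrt{\frac{x_i x_j}{(1-x_i)(1-x_j)}} \le -\frac{1}{2t},
\]
so non-edges are pushed apart by a margin of order $1/t$. This is the bipartite analogue of a vector $t$-colouring of the complement graph, and it is where the square loss in the reduction of \cite{ChalermsookJO20} is avoided.

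\textbf{Rounding.} I draw a standard Gaussian $g$ and keep $i \in U'$ (respectively $j \in V'$) if $\langle v_i, g\rangle \ge c$. A vertex survives with probability $\approx e^{-c^2/2}$, while a non-edge survives with probability $\approx e^{-c^2/(1+\rho)}$ with $\rho \le -1/(2t)$, that is, $\lesssim e^{-c^2(1+1/(2t))/2 \cdot 2}$. I then delete one endpoint of each surviving non-edge. If the expected number of surviving non-edges is at most a quarter of the expected surviving side sizes, the result is a biclique with at least a constant fraction of the survivors on each side, and I truncate the larger side to balance it. Concentration, or repeating $\mathrm{poly}(n)$ times, upgrades expectation to high probability.

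\textbf{Main obstacle.} The issue is that the computation above only works when the number of non-edges at each vertex is bounded by some $\Delta$. With a global count of $n^2$ non-edges, the threshold $c$ that makes the non-edges negligible forces the sides down to $n^{1-\Omega(1)}$ only when $\Delta$ is small. In general the bound degrades to $k/\Delta^{1-\Theta(1/t)}$, in the KMS form.

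I would resolve this with an Alon--Kahale/Halperin-type dichotomy, applied recursively.
\begin{itemize}
\item \emph{Sparse case.} If every heavy vertex has non-degree at most $\Delta$ into the other side, the rounding above with $c^2/2 \approx (1-\Theta(1/t))\ln \Delta$ yields sides of size $\tOmega(k\,\Delta^{-(1-\Theta(1/t))}) \ge n^{\Theta(1/t)}$.
\item \emph{Dense case.} Otherwise, some heavy vertex $i$ has at least $\Delta$ non-neighbours. I then pass to the subgraph induced on its neighbourhood $N(i)$ together with the opposite side, and rescale the vectors by $u_j \mapsto$ (the projection relative to $u_i$). Since $x_i \ge 1/(2t)$, the SDP mass on $N(i)$ is preserved while the vertex count drops by the factor $(n-\Delta)/n$. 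The relative parameter $t$ therefore does not increase by more than a $(1+O(1/t))$ factor, while $n$ shrinks.
\end{itemize}

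The delicate step is the potential-function accounting. I need the number of dense-case steps, and the total growth of $t$ across them, to be such that the final sparse-case round still has $\log(\text{current }n)/t \ge \log n/(D t)$. This is exactly where the hypothesis $k \ge n\, D\log\log n/\log n$, i.e.\ $t \le \log n/(D\log\log n)$, should be used: it guarantees that the $O(t\log t)$-type losses in the exponent stay below a constant fraction of $\log n$. Choosing $\Delta = n^{1-1/(Ct)}$ and $D$ large relative to $C$ is what I would try first.
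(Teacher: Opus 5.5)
\textbf{There is a genuine gap.} Your argument never uses the fractional-degree constraints \eqref{eq:frac_deg_u}--\eqref{eq:frac_deg_v} (nor \eqref{eq:nonneg}). With only the constraints you assume, no rounding can give the theorem.

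The paper's integrality-gap example shows this. Take a unit $f\perp e$ and set $u_i=(e+f)/2$ for all $i\in U$ and $u_j=(e-f)/2$ for all $j\in V$. Every cross inner product is $0$, so this is feasible for \emph{every} bipartite graph, including the empty one, with $k=n/2$ and $t=2$.

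\begin{itemize}
\item \emph{Your rounding.} In your normalization this gives $v_i=f$ on $U'$ and $v_j=-f$ on $V'$. For any threshold $c\ge 0$, almost surely at most one side survives.
\item \emph{Your sparse case.} The same solution is feasible for graphs of tiny non-degree, e.g.\ $K_{n,n}$ minus a perfect matching, so the sparse case fails too. Bounding surviving non-edges against the expected survivors of each side \emph{separately} says nothing about both sides surviving \emph{simultaneously}. The paper flags exactly this failure of Halperin-style vertex counting.
\item \emph{Your dense case.} The claim that the SDP mass on $N(i)$ is preserved has no support without \eqref{eq:frac_deg_u}: nothing forces $\sum_{j\in N(i)}\langle u_i,u_j\rangle$ to be large.
\item \emph{Your full projection.} $v_i$ is undefined when $x_i=1$, e.g.\ for the integral solution. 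Even when defined, it makes every edge with $\langle u_i,u_j\rangle<x_ix_j$ negatively correlated.
\item \emph{Your recursion.} The potential-function accounting is left open.
\end{itemize}

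\textbf{The missing idea} is to count surviving \emph{edges} rather than vertices. That requires certifying positive correlation across the bipartition, which is exactly what \eqref{eq:frac_deg_u}--\eqref{eq:frac_deg_v} provide.

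\begin{itemize}
\item \emph{Many good edges.} Let $A,B$ be the vertices with $c_i\ge 1/(8t)$. Inclusion--exclusion with \eqref{eq:nonneg}, the degree constraints and \eqref{eq:vector_sum} gives $\sum_{A\times B}\langle u_i,u_j\rangle\ge k^2-2nk/(8t)=\frac{3}{4}k^2$. Since $\sum_{A\times B}c_ic_j\le k^2$ and $\langle u_i,u_j\rangle\le 1$, at least $k^2/4$ pairs have $\langle u_i,u_j\rangle>\frac{1}{2}c_ic_j$, and all of them are edges.
\item \emph{Partial shift.} The paper shifts only partway, $u'_i=u_i-(1-1/\sqrt{2})c_ie$, so that $\langle u'_i,u'_j\rangle=\langle u_i,u_j\rangle-\frac{1}{2}c_ic_j$. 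The good edges become nonnegatively correlated. Non-edges still have correlation at most $-\frac{1}{2}\sqrt{c_ic_j}\le-\frac{1}{16t}$. Your full shift would need $\langle u_i,u_j\rangle\ge c_ic_j$, which the constraints do not guarantee for many edges.
\item \emph{Global threshold.} With $\tau=\sqrt{0.1\log n}$, one gets $\E[F]\ge\frac{k^2}{16\tau^2}\phi(\tau)^2$ and $\E[Q]\le n^2\frac{\sqrt{\pi}}{\tau}\phi(\tau)^2e^{-\tau^2/(16t)}$. Hence $\E[Q]\le\E[F]/(4r)$ with $r=n^{1/(160t)}/(64\sqrt{\pi}\,\tau t^2)$.
\item \emph{Extraction.} The greedy extraction lemma ($F-2rQ\ge 2nr\Rightarrow K_{r,r}$) finishes.
\end{itemize}

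Because there are $\Omega(n^2/t^2)$ good edges, the trivial count of $n^2$ non-edges costs only a $t^2$ factor. So no degree-based sparse/dense dichotomy or recursion is needed. The obstacle you identified comes from comparing non-edges against $\Theta(k)$ vertices instead of $\Theta(k^2)$ edges.
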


\paragraph{Integrality Gap for A Natural Relaxation.} 
To illustrate a challenge faced when using SDP relaxation for MBB, let us first state the following ``natural'' extension of Lovasz's $\vartheta$ function~\cite{Lovasz79}, which is used\footnote{Note that the formulation below looks different from one defined in \cite{Lovasz79} but the latter can be written in this form~\cite{KleinbergG98}.} for Maximum Clique~\cite{AlonK98,Halperin02}. Here, we associate a vector $u_i \in \mathbb{R}^d$ with each $i \in U \cup V$, and let $e \in \mathbb{R}^d$ be an arbitrary unit vector.
\begin{figure}[htbp]
\begin{framed}
\begin{align}
\text{Maximize} \qquad &  k \nonumber \\
\text{Subject to} \qquad    & \|e\|^2 = 1 \nonumber \\
    & \|u_i\|^2 = \langle u_i, e \rangle \quad &\forall i \in U \cup V \nonumber \\
    & \sum_{i \in U} \langle u_i, e \rangle = \sum_{j \in V} \langle u_j, e \rangle = k \label{eq:vector_sum-two-sided} \\
    & \langle u_i, u_j \rangle = 0 &\forall (i, j) \in (U \times V) \setminus E \nonumber \\
    & \langle u_i, u_j \rangle \geq 0 &\forall (i, j) \in U \times V \nonumber
\end{align}
\end{framed}
\caption{SDP Relaxation for MBB (First Attempt)} \label{fig:sdp-relaxation-theta}
\end{figure}

If $G$ contains $K_{k,k}$, there is a feasible solution (setting vectors in the biclique to $e$ and all others to $0$) of value $k$.  We remark that the only difference compared to the Maximum Clique formulation here is the constraint \eqref{eq:vector_sum-two-sided}, which aims to ensure that the number of vertices picked in $U$ is the same as that in $V$ (so that the biclique is balanced). 

Unfortunately, this SDP relaxation is completely unhelpful. We can always find a solution with $k = n/2$: Let $f$ be any unit vector orthogonal to $e$, and let $u_i = \frac{e + f}{2}$ for all $i \in U$ and $u_j = \frac{e - f}{2}$ for all $j \in V$. This ensures that $\langle u_i, u_j \rangle = 0$ for all $(i, j) \in U \times V$. In other words, the integrality gap is unbounded.

\paragraph{Our SDP Relaxation.} Given the integrality gap, we thus use a stronger SDP relaxation for MBB. We again associate a vector $u_i \in \mathbb{R}^d$ with each $i \in U \cup V$, and let $e \in \mathbb{R}^d$ be a unit vector. The SDP relaxation is stated below. Note that we formulate it as a feasibility problem rather than an optimization problem.
\begin{figure}[htbp]
\begin{framed}
\begin{align}
    & \|e\|^2 = 1 \nonumber \\
    & \|u_i\|^2 = \langle u_i, e \rangle \quad &\forall i \in U \cup V \nonumber \\
    & \sum_{i \in U} \langle u_i, e \rangle = \sum_{j \in V} \langle u_j, e \rangle = k \label{eq:vector_sum} \\
    & \langle u_i, u_j \rangle = 0 &\forall (i, j) \in (U \times V) \setminus E \label{eq:nonedge} \\
    & \langle u_i, u_j \rangle \geq 0 &\forall (i, j) \in U \times V \label{eq:nonneg} \\
    & \sum_{j \in V} \langle u_i, u_j \rangle = k \langle u_i, e\rangle &\forall i \in U \label{eq:frac_deg_u} \\
    & \sum_{i \in U} \langle u_i, u_j \rangle = k \langle u_j, e\rangle  &\forall j \in V \label{eq:frac_deg_v}
\end{align}
\end{framed}
\caption{SDP Relaxation for MBB} \label{fig:sdp-relaxation}
\end{figure}

We note that the only additional constraints are \eqref{eq:frac_deg_u} and \eqref{eq:frac_deg_v}, which are ``fractional degree'' constraints aiming to ensure that if $i$ is picked, then $k$ of its neighbor must be picked.
Again, if $G$ contains $K_{k,k}$, there is a feasible solution (setting vectors in the biclique to $e$ and all others to $0$). 

Our main technical contribution (\Cref{thm:sdp-main}) is to show that a feasible solution to this SDP can be rounded to a large biclique. It turns out that the rounding procedure is essentially the same as that of \cite{Halperin02} in the clique case: After some preprocessing, pick a random Gaussian $g$ and keep a vertex $i$ if its dot product with $g$ is at least a certain threshold. (The full procedure is presented in \Cref{fig:sdp-rounding}.)

The main difference however is in the analysis. In \cite{Halperin02}, one simply computes the expected number of vertices that survive the rounding and the expected survived number of non-edge pairs. If the former is much larger than the latter, then (with not too small probability) we can find a large clique by deleting one end of every non-edge pair. This analysis does not work for us: Even if the expected number of survived vertices is large, it is possible that the rounding either keep all vertices in $U$ and no vertex in $V$, or vice versa. Indeed, this is the situation that occurs with the above integrality gap instance of the SDP in \Cref{fig:sdp-relaxation-theta}.

To overcome this issue, we instead lower bound the expected number of survived \emph{edges}. We do this by showing that, due to the extra constraints \eqref{eq:frac_deg_u} and \eqref{eq:frac_deg_v}, many edges $(i, j) \in E$ are ``good'' in the sense that $\langle u_i, u_j \rangle$ is not too small. This implies that they are positively correlated in the rounding procedure, which allows us to lower bound the expected number of survived edges. This conclude our proof overview for the rounding procedure, which is formalized in \Cref{sec:rounding} after some preliminaries.

\section{Preliminaries}

Throughout this work, we use $\log$ to denote the natural logarithm.

\subsection{Gaussian Tail Bounds} 

Our rounding scheme relies on probability bounds for (multivariate) normal distributions, which we will state here for convenience. We note that these bounds are standard and are also used in approximation algorithms for Maximum Clique (e.g. \cite{Halperin02}). First, let us start with the following lemma:
\begin{lemma}[\cite{feller1991introduction}] \label{lem:gaussian-corr-bound}
Let $\phi(x) = \frac{1}{\sqrt{2\pi}} \cdot \exp(-x^2/2)$ be the PDF of the standard Gaussian distribution $\cN(0, 1)$. If $Z \sim \cN(0, 1)$ and $\tau \geq 2$, then
$\frac{\phi(\tau)}{2\tau} \leq \Pr[Z \geq \tau] \leq \frac{\phi(\tau)}{\tau}$.
\end{lemma}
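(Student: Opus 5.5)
The plan is to prove both inequalities by calculus on $\phi$, using the identity $\phi'(x) = -x\,\phi(x)$. We write $\Pr[Z \geq \tau] = \int_\tau^\infty \phi(x)\,dx$ and compare the integrand with exact derivatives of simple functions of the form $h(x)\phi(x)$.

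\emph{Upper bound.} For $x \geq \tau > 0$ we have $1 \leq x/\tau$. Hence
\[
\Pr[Z \geq \tau] \leq \frac{1}{\tau}\int_\tau^\infty x\,\phi(x)\,dx = \frac{1}{\tau}\bigl[-\phi(x)\bigr]_\tau^\infty = \frac{\phi(\tau)}{\tau}.
\]
This holds for every $\tau > 0$, not only for $\tau \geq 2$.

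\emph{Lower bound.} Consider the function $\psi(x) = \bigl(x^{-1} - x^{-3}\bigr)\phi(x)$. A direct computation using $\phi' = -x\phi$ gives
\[
\psi'(x) = \bigl(-x^{-2} + 3x^{-4}\bigr)\phi(x) - \bigl(1 - x^{-2}\bigr)\phi(x) = -\bigl(1 - 3x^{-4}\bigr)\phi(x).
\]
In particular $-\psi'(x) \leq \phi(x)$ for all $x > 0$. Since $\psi(x) \to 0$ as $x \to \infty$, integrating from $\tau$ to $\infty$ yields
\[
\Bigl(\frac{1}{\tau} - \frac{1}{\tau^3}\Bigr)\phi(\tau) = \int_\tau^\infty \bigl(1 - 3x^{-4}\bigr)\phi(x)\,dx \leq \Pr[Z \geq \tau].
\]
For $\tau \geq 2$ we have $1/\tau^3 \leq 1/(4\tau)$, so the left-hand side is at least $\frac{3}{4\tau}\phi(\tau) \geq \frac{\phi(\tau)}{2\tau}$, which is the claimed lower bound.

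There is no real obstacle here. The only nonroutine step is choosing the auxiliary function $\psi$ for the lower bound; once it is chosen, the derivative computation is mechanical. The hypothesis $\tau \geq 2$ is used only in that final numeric comparison of $\frac{3}{4\tau}$ with $\frac{1}{2\tau}$.
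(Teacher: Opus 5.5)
Your proof is correct. The paper gives no proof of its own and only cites Feller, whose argument matches yours: your lower bound, via $-\frac{d}{dx}\bigl[(x^{-1}-x^{-3})\phi(x)\bigr] = (1-3x^{-4})\phi(x) \le \phi(x)$, is exactly his. Your upper bound uses $1 \le x/\tau$ in place of his identity $-\frac{d}{dx}\bigl[x^{-1}\phi(x)\bigr] = (1+x^{-2})\phi(x) \ge \phi(x)$, which is an equally standard route.
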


For two standard Gaussian variables $X, Y$ with correlation $\rho = \mathbb{E}[XY]$, we use the following bivariate tail bounds for $\Pr[X \geq \tau, Y \geq \tau]$ for $\tau \geq 2$:
\begin{itemize}
    \item \textbf{Positive Correlation ($\rho \ge 0$):} Since $X, Y$ are positively correlated, the joint probability is lower-bounded by the product of their marginals:
    \begin{equation}
        \Pr[X \ge \tau, Y \ge \tau] \ge \Pr[X \ge \tau]\Pr[Y \ge \tau] \ge \frac{\phi(\tau)^2}{4\tau^2}, \label{eq:pos_corr}
    \end{equation}
    where the last inequality is due to \Cref{lem:gaussian-corr-bound}.
    \item \textbf{Negative Correlation ($\rho < 0$):} Notice that $\frac{X+Y}{\sqrt{2 + 2\rho}} \sim \mathcal{N}(0, 1)$. Thus,
    \begin{align}
        \Pr[X \ge \tau, Y \ge \tau] \le \Pr[X+Y \ge 2\tau] 
        < \frac{1}{\sqrt{2}\tau} \cdot \phi\Paren{\frac{\sqrt{2} \cdot \tau}{\sqrt{1 + \rho}}}
        \leq \frac{\sqrt{\pi}}{\tau} \cdot \phi(\tau)^2 \cdot \exp\Paren{\rho \cdot \tau^2}  \label{eq:neg_corr}
    \end{align}
    where the first inequality is due to \Cref{lem:gaussian-corr-bound} and $\rho < 0$, and the second inequality follows from $\rho < 0$.
\end{itemize}
\subsection{Biclique Extraction Lemma}
Before we describe the rounding, let us also state a self-contained combinatorial step of extracting a balanced biclique from a sufficiently dense surviving subgraph.

\begin{lemma}[Greedy Extraction] \label{lem:extraction}
Let $H = (U_H, V_H, E_H)$ be a bipartite graph with at most $n$ vertices on each side. Let $F$ be the total number of edges (i.e. $F = |E_H|$) and $Q$ be the total number of non-edges in $H$ (i.e. $Q = |U_H| \cdot |V_H| - |E_H|$). If there exists an integer $r \ge 1$ such that $F - 2rQ \ge 2nr$, then there is a polynomial-time algorithm that finds a $K_{r, r}$ in $H$.
\end{lemma}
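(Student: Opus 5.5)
The plan is a direct counting argument that avoids any iterative deletion. We choose $r$ vertices of $U_H$ that miss few vertices of $V_H$, and show that their common neighborhood in $V_H$ has at least $r$ vertices. For each $u \in U_H$, let $q(u) = |V_H| - \deg_H(u)$ be the number of non-neighbors of $u$ in $V_H$. Then $\sum_{u \in U_H} q(u) = Q$.

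First, we check that $U_H$ has at least $r$ vertices. Since $F \le |U_H| \cdot |V_H| \le |U_H| \cdot n$ and $F \ge F - 2rQ \ge 2nr$, we get $|U_H| \ge 2r \ge r$.

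The algorithm sorts $U_H$ by $q(u)$ and takes $u_1,\dots,u_r$ to be the $r$ vertices with smallest $q$. Their common neighborhood is $W = \bigcap_{\ell} N(u_\ell)$, and every vertex of $V_H \setminus W$ is a non-neighbor of some $u_\ell$, so
\[
|W| \;\ge\; |V_H| - \sum_{\ell=1}^r q(u_\ell).
\]
The $r$ smallest values of $q$ are at most the average, so $\sum_{\ell=1}^r q(u_\ell) \le r \cdot Q/|U_H|$. Hence it suffices to show $|V_H| - rQ/|U_H| \ge r$, which is equivalent to
\[
|U_H|\cdot|V_H| - rQ \;\ge\; r\,|U_H|.
\]

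This follows from the hypothesis. We have $|U_H|\cdot|V_H| \ge F$, and $F - 2rQ \ge 2nr$ gives $F - rQ \ge 2nr + rQ \ge 2nr \ge r|U_H|$, since $|U_H| \le n$. Therefore $|W| \ge r$. Outputting $u_1,\dots,u_r$ together with any $r$ vertices of $W$ yields a $K_{r,r}$. All steps (computing degrees, sorting, intersecting neighborhoods) take polynomial time.

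There is no real obstacle here; the hypothesis has a factor-$2$ slack beyond what this argument needs. The only points requiring care are justifying $|U_H| \ge r$, which comes from $F \ge 2nr$, and the averaging step for the $r$ smallest non-degrees.
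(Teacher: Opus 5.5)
Your proof is correct, and it takes a different route from the paper's.

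The paper first runs an iterative density-cleaning step. It repeatedly deletes any vertex $i$ with $d_F(i) \le 2r\,d_Q(i)$, tracking the potential $F - 2rQ$, which never decreases under these deletions. In the cleaned graph $H'$, every vertex (on either side) therefore misses fewer than a $\frac{1}{2r}$ fraction of the opposite side. The paper then takes \emph{any} $r$ vertices of $U_{H'}$ and loses at most $|V_{H'}|/2$ candidate partners.

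You skip the cleaning entirely. Instead you choose the $r$ vertices of $U_H$ with smallest non-degree and bound their total non-degree by $rQ/|U_H|$ via averaging. Your argument is shorter and single-pass. It also shows the hypothesis is far from tight: besides $|U_H| \ge r$, you only use $|U_H|\cdot|V_H| - rQ \ge r|U_H|$, i.e.\ $F - (r-1)Q \ge r|U_H|$.

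What the paper's cleaning buys is a uniform, two-sided structure. It produces a subgraph in which every vertex has small relative non-degree, so every $r$-subset of $U_{H'}$ has at least $|V_{H'}|/2$ common neighbours. A graph whose bipartite complement has bounded maximum degree is what the techniques in the remark right after the lemma would be applied to, to boost $r$ to $\Omega(r\log n)$. Your averaging certifies only one particular $r$-set and does not yield such a subgraph. For the lemma as stated, that does not matter.
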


\begin{proof}
Let $d_F(i)$ and $d_Q(i)$ be the degrees of a vertex $i$ in the edges and non-edges of $H$, respectively. We apply the following two-stage procedure:
\begin{enumerate}
    \item \textbf{Density Cleaning:} While there exists any vertex $i$ such that $d_F(i) \le 2r \cdot d_Q(i)$, delete $i$ from $H$.
    
    To analyze this procedure, we track $W = F - 2rQ$. Upon deleting $i$, the change in $W$ is $-d_F(i) + 2r \cdot d_Q(i) \ge 0$. Thus, the final cleaned graph $H' = (U_{H'}, V_{H'}, E_{H'})$ satisfies $W' \ge 2nr$. 
    
    Since $W' \leq |E_{H'}| \leq |U_{H'}| \cdot |V_{H'}|$ and we assume that each side of $H$ is of size at most $n$, we can conclude that $|U_{H'}|, |V_{H'}|$ are both at least $2r$.
    
    \item \textbf{Biclique Construction:} 
    We arbitrarily select exactly $r$ vertices from $U_{H'}$ to form $U_{cliq}$. For each selected $u \in U_{cliq}$, we delete its non-neighbors from $V_{H'}$; from how $H'$ is constructed, there are at most $\frac{|V_{H'}|}{2r}$ such non-neighbors. Thus, the total number of vertices deleted from $V_{H'}$ is at most $r \left( \frac{|V_{H'}|}{2r} \right) = \frac{|V_{H'}|}{2}$. 
    Consequently, the number of vertices surviving in $V_{H'}$ is at least $\frac{|V_{H'}|}{2}$. Because we established $|V_{H'}| \ge 2r$, at least $\frac{2r}{2} = r$ vertices remain. We arbitrarily select $r$ of these surviving vertices to form $V_{cliq}$. The set $U_{cliq} \cup V_{cliq}$ is the desired balanced biclique. \qedhere
\end{enumerate}
\end{proof}

We remark that, for the regime $\Theta(1)\leq r \leq n^{1 - \Omega(1)}$, known techniques~\cite{FeigeK10,AxenovichSSW21} can be used in the ``Biclique Construction'' step to improve the bound from $r$ to $\Omega(r \log n)$. However, this does not affect the overall approximation ratio of our algorithm and, thus, we opt for the argument above for simplicity.

\section{The Rounding Algorithm and Proof of \Cref{thm:sdp-main}}
\label{sec:rounding}

The full SDP rounding procedure is presented in \Cref{fig:sdp-rounding}.

\begin{figure}[htbp]
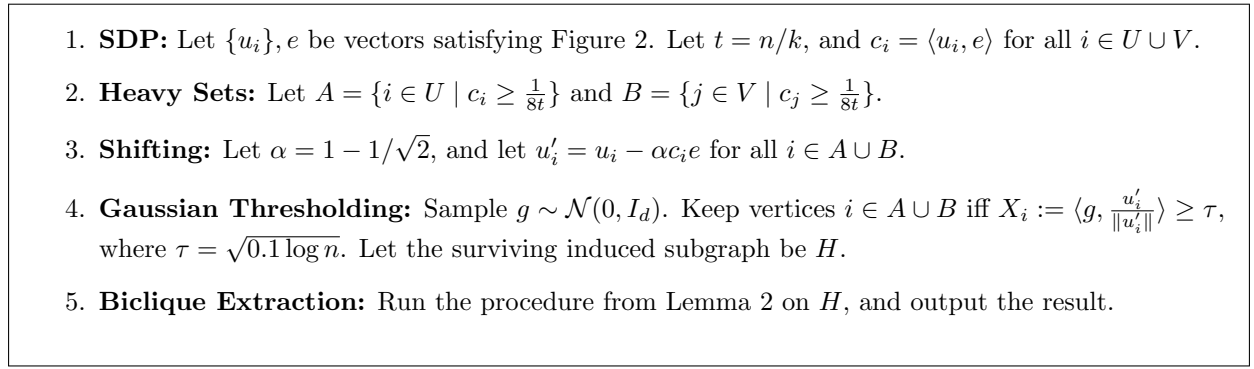

\begin{framed}
\begin{enumerate}
    \item \textbf{SDP:} Let $\{u_i\}, e$ be vectors satisfying \Cref{fig:sdp-relaxation}. Let $t = n/k$, and $c_i = \left<u_i, e\right>$ for all $i \in U \cup V$.
    \item \textbf{Heavy Sets:} Let $A = \{ i \in U \mid c_i \ge \frac{1}{8t} \}$ and $B = \{ j \in V \mid c_j \ge \frac{1}{8t} \}$.
    \item \textbf{Shifting:} Let $\alpha = 1 - 1/\sqrt{2}$, and let $u'_i = u_i - \alpha c_i e$ for all $i \in A \cup B$.
    \item \textbf{Gaussian Thresholding:} Sample $g \sim \mathcal{N}(0, I_d)$. Keep vertices $i \in A \cup B$ iff $X_i := \langle g, \frac{u'_i}{\|u'_i\|} \rangle \ge \tau$, where $\tau = \sqrt{0.1 \log n}$. 
    Let the surviving induced subgraph be $H$.
    \item \textbf{Biclique Extraction:} Run the procedure from Lemma \ref{lem:extraction} on $H$, and output the result.
\end{enumerate}
\end{framed}
\caption{SDP Rounding Algorithm for MBB} \label{fig:sdp-rounding}
\end{figure}

\begin{proof}[Proof of \Cref{thm:sdp-main}]
 We will next analyze the guarantee of the rounding algorithm. To do so, let $D = 1000$ and assume that $t \leq \frac{\log n}{D \cdot \log \log n}$.

For any subsets $S \subseteq U$ and $T \subseteq V$, let $M(S, T) := \sum_{(i,j) \in S \times T} \langle u_i, u_j \rangle$. We have
\begin{align}
M(A, B) &= M(U, V) - M(U \setminus A, V) - M(U, V \setminus B) + M(U \setminus A, V \setminus B) \nonumber \\
&\overset{\eqref{eq:nonneg}}{\geq} M(U, V) - M(U \setminus A, V) - M(U, V \setminus B) \nonumber \\
&= \sum_{i \in U} \left(\sum_{j \in V} \left<u_i, u_j\right>\right) - \sum_{i \in U \setminus A} \left(\sum_{j \in V} \left<u_i, u_j\right>\right) - \sum_{j \in V \setminus B}  \left(\sum_{i \in U} \left<u_i, u_j\right>\right) \nonumber \\
&\overset{\eqref{eq:frac_deg_u}, \eqref{eq:frac_deg_v}} = \sum_{i \in U} k \cdot c_i - \sum_{i \in U \setminus A} k \cdot c_i - \sum_{j \in V \setminus B} k \cdot c_j \nonumber \\
&\overset{\eqref{eq:vector_sum}}{=} k^2 - \sum_{i \in U \setminus A} k \cdot c_i - \sum_{j \in V \setminus B} k \cdot c_j \nonumber \\
&\overset{(\star)}{\geq} k^2 - n \cdot k \cdot \frac{1}{8t} -  n \cdot k \cdot \frac{1}{8t} \qquad = \frac{3}{4}k^2, \label{eq:mass-lb}
\end{align}
where $(\star)$ follows from our definition of $A, B$.

Let $E^+ = \{ (i,j) \in A \times B \mid \langle u_i, u_j \rangle > \frac{1}{2} c_i c_j \}$. 
Note that $\sum_{(i, j) \in A \times B} c_i c_j \le (\sum_{U} c_i)(\sum_{V} c_j) \overset{\eqref{eq:vector_sum}}{=} k^2$. Hence,
\[ \sum_{(i, j) \in A \times B} \left( \langle u_i, u_j \rangle - \frac{1}{2} c_i c_j \right) \geq M(A, B) - \frac{k^2}{2} \overset{\eqref{eq:mass-lb}}{\ge} \frac{k^2}{4} \]
Any pair $(i, j) \notin E^+$ contributes a non-positive value to this sum. This, together with the fact that $\langle u_i, u_j\rangle \leq 1$ for all $i, j \in U \cup V$, implies that $|E^+| \geq \frac{k^2}{4}$. Note also that, by \eqref{eq:nonedge}, all pairs $(i, j) \in E^+$ must be an edge.

Next, consider any $i, j \in U \cup V$. We have $\langle u'_i, u'_j\rangle = \langle u_i, u_j \rangle - \frac{1}{2} c_i c_j$. Note that this also implies $\|u'_i\|_2^2 = c_i(1 - \frac{1}{2}c_i) \leq c_i$. We next analyze the expected number of edges and non-edges in $H$. To do this, let $F$ denote the number of surviving edges in $H$, $Q$ denote the number of non-edges in $H$, and let $$r = \frac{1}{64\sqrt{\pi} \cdot \tau \cdot t^2} \cdot \exp\Paren{\frac{\tau^2}{16t}}.$$ Note that, for sufficiently large $n$, we have
\begin{align} \label{eq:biclique-size-bound}
r \geq \frac{1}{O((\log n)^3)} \cdot n^{\frac{1}{160t}} &\geq n^{\frac{1}{D \cdot t}}.
\end{align}

\begin{itemize}
    \item \textbf{Lower Bounding $\E[F]$:} For $(i,j) \in E^+$, since $\langle u_i, u_j \rangle \ge \frac{1}{2}c_i c_j$, we have $\langle u'_i, u'_j \rangle \ge 0$. Thus, $\E[X_iX_j] \geq 0$. Applying \eqref{eq:pos_corr}, the probability that $(i, j)$ is kept in $H$ is at least $\frac{\phi(\tau)^2}{4\tau^2}$. Hence, 
    \begin{align}
    \E[F] &\geq |E^+| \cdot \frac{\phi(\tau)^2}{4\tau^2} \geq \frac{k^2}{4} \cdot \frac{\phi(\tau)^2}{4\tau^2}. \label{eq:lb-f-intermediate}
    \end{align}
    Since $k = n/t$ and due to our bound on $t$, for any sufficiently large $n$, we have
    \begin{align} \label{eq:lb-f-final}
    \E[F] \geq \frac{n^2}{O((\log n)^3)} \cdot \exp(-0.1 \log n) \geq n^{1.8} \geq n \cdot \exp\Paren{\frac{\tau^2}{16t}} \geq 8nr.
    \end{align}
    \item \textbf{Upper Bounding $\E[Q]$:} Consider any $(i, j) \in (A \times B) \setminus E$. By \eqref{eq:nonedge}, $\langle u_i, u_j \rangle = 0$, which implies $\langle u'_i, u'_j \rangle = -\frac{1}{2} c_i c_j$. Normalizing bounds the correlation:
    \[ \E[X_iX_j] = \left\langle \frac{u'_i}{\|u'_i\|}, \frac{u'_j}{\|u'_j\|} \right\rangle \le \frac{-\frac{1}{2}c_i c_j}{\sqrt{c_i c_j}} = -\frac{1}{2}\sqrt{c_i c_j} \le -\frac{1}{16t}, \]
    where the last inequality follows from definition of $A, B$. Applying \eqref{eq:neg_corr}, we have
    \begin{align}
    \E[Q] \leq n^2 \cdot \frac{\sqrt{\pi}}{\tau} \cdot \phi(\tau)^2 \cdot \exp\Paren{-\frac{\tau^2}{16t}} \overset{\eqref{eq:lb-f-intermediate}}{=} \frac{1}{4r} \cdot \E[F]. \label{eq:surviving-edge-bound}
    \end{align}
\end{itemize}
Thus, we have 
\begin{align*}
\E[F - 2rQ] \overset{\eqref{eq:surviving-edge-bound}}{\geq} \E[F] / 2 \overset{\eqref{eq:lb-f-final}}{\geq} 4nr.
\end{align*}
Since $F - 2rQ$ has value at most $n^2$, we have $F - 2rQ \geq 2nr$ with probability at least $1/n^2$. When this occurs, \Cref{lem:extraction} guarantees an $r$-biclique which satisfies the desired size bound due to \eqref{eq:biclique-size-bound}. To achieve high probability, we simply repeat the rounding process $n^3$ times. This concludes our proof.
\end{proof}

\section{Putting Things Together: Proof of \Cref{thm:main}}

Finally, we prove our main theorem (\Cref{thm:main}) by combining our SDP rounding (in the large optimum case) with the algorithm of Chalermsook et al. from \Cref{thm:cjo} (in the small optimum case).

\begin{proof}[Proof of \Cref{thm:main}]
We solve the SDP from \Cref{fig:sdp-relaxation} for $k = \left\lceil n \cdot \frac{10 \cdot D \cdot \log \log n}{\log n} \right\rceil$. If the SDP is feasible, then we simply run the rounding algorithm from \Cref{thm:sdp-main}. This ensures that we find a biclique of size at least $(\log n)^{10}$, which yields $n / (\log n)^{10} \leq O\Paren{n \cdot \Paren{\frac{\log \log n}{\log n}}^3}$-approximation.

If the SDP is infeasible, then the maximum balanced biclique has size at most $O\Paren{n \cdot \frac{\log \log n}{\log n}}$. We run Chalermsook et al.'s algorithm (\Cref{thm:cjo}). If the maximum balanced biclique has size at least $n/\log^3 n$, we have $t \leq \log^3 n$ and thus the algorithm produces $\Theta\Paren{\Paren{\frac{\log n}{\log \log n}}^2}$-biclique, which yields $O\Paren{n \cdot \Paren{\frac{\log \log n}{\log n}}^3}$-approximation. On the other hand, if the maximum balanced biclique has size at most $n/\log^3 n$, we automatically get $n/\log^3 n \leq O\Paren{n \cdot \Paren{\frac{\log \log n}{\log n}}^3}$-approximation.
\end{proof}

\section{Conclusion and Open Questions}

In this work, we give a polynomial-time $\left(\frac{n}{\widetilde{\Omega}\left((\log n)^3\right)}\right)$-approximation algorithm for the Maximum Balanced Biclique problem. Our algorithm is via SDP rounding in the case where the optimum is large, combined with the known result of Chalermsook et al.~\cite{ChalermsookJO20} in the case where the optimum is small. Apart from trying to improve the approximation ratio further, it remains an interesting question whether the former case can be handled directly via combinatorial algorithms. We remark that, in the Maximum Clique case, subgraph removal techniques--which are completely combinatorial--suffice for this case~\cite{BoppanaH90}.

Another intriguing question is whether one can achieve a similar approximation for the Maximum \emph{Edge} Biclique (MEB) problem. In MEB, we are looking for a (not necessarily balanced) biclique in $G$ that contains the maximum number of edges. A trivial algorithm yields $n$-approximation, and a simple ``enumeration style'' algorithm can improve this to $\frac{n}{\tOmega(\log n)}$. However, we are not aware of any further improvement to this. In particular, it is unclear how to apply SDP rounding when the optimum is large.

\bibliographystyle{alpha}
\bibliography{ref}

\end{document}